\renewcommand{\order}[1]{\mathcal{O}(#1)}
\newtheorem{theorem}{Theorem}[section]
\newtheorem{proposition}[theorem]{Proposition}
\title{\textbf{Spontaneous Decoherence from\\Logarithmic Spectral Phase Deformations}}
\author{Sridhar Tayur\\
Carnegie Mellon University, Pittsburgh, PA 15213\\
\texttt{stayur@cmu.edu}}
\date{Revised December 24, 2025}
\begin{document}

\maketitle

\begin{abstract}
We examine a mechanism of spontaneous decoherence in which the generator of quantum dynamics is deformed to a logarithmically modified self-adjoint operator
\begin{equation*}
F_\beta(H) = H + \beta H \log \frac{H}{E_*}
\end{equation*}
for a positive self-adjoint Hamiltonian $H$ and a fixed reference scale $E_* > 0$. Dynamical phases acquire energy-dependent factors $\exp[-it\beta E \log(E/E_*)]$, whose rapid variation across the spectrum suppresses interference between distinct energies through a non-stationary-phase mechanism. Stationary-phase analysis shows that oscillatory contributions to amplitudes decay at least as $\order{1/|\beta|}$ when $|\beta|$ is large.

Since $F_\beta(H)$ is self-adjoint for every real $\beta$, the evolution operator $U_\beta(t) = \exp[-itF_\beta(H)]$ is unitary. The kinematical structure of quantum mechanics---Hilbert-space inner products, projection operators, the Born rule---remains unchanged. Decoherence arises as suppression of interference terms in coarse-grained observables and decoherence functionals, not as norm loss or stochastic collapse. Physical motivation for logarithmic spectral deformations comes from clock imperfections, renormalization-group and effective-action corrections introducing $\log E$ terms, and semiclassical gravity analyses with complex actions generating spectral factors involving $\log(E/E_{\text{P}})$. The mechanism is illustrated with two-level systems, quartic oscillators, FRW minisuperspace models, and Schwarzschild-interior-type Hamiltonians. Current superconducting-qubit coherence times constrain $|\beta| \lesssim 10^{-5}$; trapped ions, NV centers, and cold atoms could strengthen this to $|\beta| \lesssim 10^{-8}$.
\end{abstract}

\tableofcontents

\section{Introduction}

Standard quantum dynamics evolves states unitarily under $U(t) = e^{-iHt}$, where $H$ is the system Hamiltonian. Interference patterns arise from coherent superpositions of energy eigenstates, each accumulating phase at its own characteristic frequency. These interference terms persist indefinitely in isolated quantum systems.

Yet decoherence---the suppression of quantum interference---is ubiquitous in nature. Conventional explanations invoke environmental coupling \cite{Breuer2002,Caldeira1981}, stochastic modifications \cite{Milburn1991,Diosi1987,Penrose1996,Gambini2007}, or collapse mechanisms \cite{Ghirardi1986,Bassi2003,Bassi2013,Snoke2021,Snoke2023a,Snoke2023b,Snoke2024}. Each modifies the fundamental dynamics: either through coupling to external degrees of freedom, adding noise, or breaking unitarity.

We explore a different possibility.

\subsection{Point of departure}

This paper investigates a mechanism in which decoherence arises spontaneously from a deterministic modification of the dynamical generator, without stochastic processes or external degrees of freedom. Instead of replacing $H$ by a non-Hermitian deformation, we consider a self-adjoint spectral deformation of the form
\begin{equation}
F_\beta(H) = H + \beta H \log \frac{H}{E_*},
\label{eq:deformation}
\end{equation}
defined via the spectral theorem for a positive self-adjoint $H$ and a fixed reference scale $E_* > 0$.\footnote{The choice of reference scale $E_*$ is arbitrary---shifting it changes $F_\beta(H)$ by $\beta H$, which merely redefines the zero of energy. All physical predictions depend only on differences $G(E_m) - G(E_n)$, making $E_*$ drop out.}

The evolution operator becomes
\begin{equation}
U_\beta(t) = \exp[-itF_\beta(H)],
\label{eq:evolution}
\end{equation}
with eigenstate amplitudes
\begin{equation}
U_\beta(t)|E\rangle = \exp\left[-itE \left(1 + \beta \log \frac{E}{E_*}\right)\right] |E\rangle.
\label{eq:eigenstate}
\end{equation}

The deformation introduces phases depending on $\beta E \log(E/E_*)$. This oscillatory structure suppresses interference between distinct energies by a non-stationary-phase mechanism. The mechanism is deterministic and internal to the system: it does not rely on baths, measurements, or nonlinearity. Hilbert-space inner products, projection operators, and probability assignments via the Born rule remain exactly as in standard quantum mechanics. Only the dynamical phases are modified, through a real spectral function of $H$.

\subsection{A poetic intuition}

Wheeler said that space tells matter how to move, and matter tells space how to curve. By analogy: 
\begin{quote}
 energy tells time how to tick, and time tells energy how to decohere.   
\end{quote}

This is not literal---time remains an external parameter. But energies $E$ accumulate phases $\sim tE\log E$, not just $\sim tE$. Different energies thus ``experience time'' at logarithmically different rates. Interference between them becomes unobservable when these rates diverge.

\paragraph{Physical picture.} 
Imagine an ensemble of clocks, each running at a rate slightly dependent on its energy. Low-energy modes accumulate phase as $\sim tE(1 + \beta\log E)$, while high-energy modes accumulate phase at a different rate. Over time, these clocks drift out of sync by amounts proportional to $\beta t E \log E$. When we measure an observable that involves coherence between different energy components, we see an average over many such drifting clocks, and this average washes out if the drift (the $\beta$ term) is large enough. The remarkable feature is that this desynchronization occurs even though each clock, individually, keeps perfect time (unitarity is preserved).

\subsection{Scope of this work}

This work is phenomenological. We do not derive $\beta$ from first principles, solve the measurement problem, or claim to replace environmental decoherence. Instead, we ask: if logarithmic spectral corrections exist---from renormalization group flow, quantum gravity, or imperfect clocks---what would they do to quantum coherence?

The answer: they produce observable dephasing even in isolated systems, testable with current experimental precision.

Section~\ref{sec:deformation} introduces logarithmic spectral deformations and establishes basic properties, including self-adjointness and unitarity. Section~\ref{sec:nonstationary} presents a non-stationary-phase estimate that quantifies interference suppression. Section~\ref{sec:motivation} develops physical motivations from clock imperfections, renormalization-group flow, and semiclassical gravity. Section~\ref{sec:fractional} explains how the present mechanism differs from standard real-order fractional calculus. Section~\ref{sec:examples} illustrates the mechanism in several simple Hamiltonians. Section~\ref{sec:phenomenology} discusses phenomenology and experimental constraints. Section~\ref{sec:related} situates the proposal among existing decoherence models. Section~\ref{sec:discussion} offers concluding remarks.

\section{Logarithmic Spectral Deformations}
\label{sec:deformation}

Let $H$ be a positive self-adjoint operator with spectral resolution
\begin{equation}
H = \int_0^\infty E \, d\Pi(E),
\label{eq:spectral}
\end{equation}
where $\Pi(E)$ is the projection-valued measure associated with $H$. For any Borel measurable function $f : (0, \infty) \to \mathbb{R}$, the functional calculus defines the self-adjoint operator
\begin{equation}
f(H) = \int_0^\infty f(E) \, d\Pi(E).
\label{eq:functional}
\end{equation}

\subsection{Definition of the deformation}

Choose a reference energy scale $E_* > 0$ (the specific value will not matter). For $\beta \in \mathbb{R}$, define the real spectral function
\begin{equation}
F_\beta(E) = E\left(1 + \beta \log \frac{E}{E_*}\right),
\label{eq:Fbeta}
\end{equation}
and the corresponding operator
\begin{equation}
F_\beta(H) = \int_0^\infty F_\beta(E) \, d\Pi(E) = H + \beta H \log \frac{H}{E_*},
\label{eq:FbetaH}
\end{equation}
where $\log(H/E_*)$ is defined via the spectral calculus.

\begin{proposition}
For each $\beta \in \mathbb{R}$, the operator $F_\beta(H)$ is self-adjoint on a dense domain $D(F_\beta(H)) \subset \mathcal{H}$.
\end{proposition}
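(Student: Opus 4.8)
The plan is to treat $F_\beta(H)$ not as a sum of two operators but as a single real-valued function of $H$ supplied by the Borel functional calculus, for which self-adjointness is automatic. First I would record the elementary analytic facts about the scalar function $F_\beta(E) = E\bigl(1 + \beta\log(E/E_*)\bigr)$: on $(0,\infty)$ it is real-valued and smooth, and since $E\log(E/E_*) \to 0$ as $E \to 0^+$ it extends continuously to $[0,\infty)$ by setting $F_\beta(0) = 0$. In particular $F_\beta$ is a real Borel measurable function on $[0,\infty) \supseteq \sigma(H)$ that is bounded on every compact set $[0,R]$.

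Next I would invoke the spectral theorem in its projection-valued-measure form. Writing $\mu_\psi(\cdot) = \langle \psi, \Pi(\cdot)\psi\rangle$ for the finite positive spectral measure of $\psi \in \mathcal{H}$, the operator
\[
F_\beta(H) = \int_{[0,\infty)} F_\beta(E)\, d\Pi(E)
\]
is, by the standard theorem on functions of a self-adjoint operator, self-adjoint on its natural domain
\[
D(F_\beta(H)) = \Bigl\{\psi \in \mathcal{H} : \int_{[0,\infty)} \abs{F_\beta(E)}^2 \, d\mu_\psi(E) < \infty \Bigr\},
\]
precisely because $F_\beta$ is real-valued and Borel (one has $f(H)^* = \overline{f}(H)$, which equals $f(H)$ here). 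No separate verification of symmetry or computation of deficiency indices is needed.

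It then remains to check density of $D(F_\beta(H))$. For this I would use bounded spectral truncations: let $\mathcal{H}_R = \Pi([0,R])\mathcal{H}$. For $\psi \in \mathcal{H}_R$ the measure $\mu_\psi$ is supported in $[0,R]$, where $\abs{F_\beta} \le M_R := \sup_{[0,R]}\abs{F_\beta} < \infty$, so $\int \abs{F_\beta}^2 \, d\mu_\psi \le M_R^2 \norm{\psi}^2 < \infty$ and hence $\mathcal{H}_R \subseteq D(F_\beta(H))$. Since $\Pi([0,R]) \to I$ strongly as $R \to \infty$, the union $\bigcup_{R>0} \mathcal{H}_R$ is dense, and therefore so is $D(F_\beta(H))$.

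I do not expect a real obstacle here; the only points needing a word of care are (i) the bottom of the spectrum — one must note $E\log E \to 0$ so that $F_\beta$ is in fact continuous on all of $[0,\infty)$, showing that $0 \in \sigma(H)$ causes no difficulty — and (ii) resisting the temptation to argue via ``$H$ self-adjoint plus $\beta H\log(H/E_*)$ self-adjoint,'' since a sum of unbounded self-adjoint operators need not be self-adjoint (indeed $D(F_\beta(H)) \subsetneq D(H)$ because $F_\beta$ grows faster than linearly). Handling $F_\beta$ as a single function of $H$ sidesteps this issue entirely.
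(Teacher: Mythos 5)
Your proposal is correct and follows essentially the same route as the paper: $F_\beta$ is a real-valued Borel function of $H$, so the spectral theorem gives self-adjointness on the natural domain. Your density argument via the spectral truncations $\Pi([0,R])\mathcal{H}$ is in fact more careful than the paper's one-line growth remark (and correctly notes that no appeal to the growth rate of $F_\beta$ is needed), but it is a fleshing-out of the same approach rather than a different one.
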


\begin{proof}
The function $F_\beta(E)$ in Eq.~\eqref{eq:Fbeta} is real-valued and measurable on $(0, \infty)$. The spectral theorem then guarantees that $F_\beta(H)$ is self-adjoint on the domain
\begin{equation}
D(F_\beta(H)) = \left\{\psi \in \mathcal{H} : \int_0^\infty |F_\beta(E)|^2 \, d\langle\psi, \Pi(E)\psi\rangle < \infty\right\},
\end{equation}
which is dense because $H$ is self-adjoint and $F_\beta$ grows at most linearly times a logarithm.
\end{proof}

\subsection{Unitarity of the deformed dynamics}

Define the deformed evolution operator
\begin{equation}
U_\beta(t) = e^{-itF_\beta(H)}, \quad t \in \mathbb{R}.
\label{eq:Ubeta}
\end{equation}

By Stone's theorem, because $F_\beta(H)$ is self-adjoint, $\{U_\beta(t)\}_{t\in\mathbb{R}}$ is a strongly continuous one-parameter unitary group:
\begin{equation}
U_\beta(t)^\dagger U_\beta(t) = U_\beta(t) U_\beta(t)^\dagger = I, \quad U_\beta(t + s) = U_\beta(t) U_\beta(s).
\label{eq:unitary}
\end{equation}

In the energy eigenbasis, Eq.~\eqref{eq:spectral} implies
\begin{equation}
U_\beta(t)|E\rangle = e^{-itF_\beta(E)} |E\rangle = \exp\left[-itE \left(1 + \beta \log \frac{E}{E_*}\right)\right] |E\rangle.
\label{eq:eigenevo}
\end{equation}

The additional phase is real for all $E > 0$, so the norm of any state is preserved. The Born rule and probabilities assigned to projection operators remain exactly as in the undeformed theory.

\subsubsection{Explicit verification for discrete spectrum}

For a state in the energy eigenbasis, $|\psi(0)\rangle = \sum_n c_n |E_n\rangle$, the evolved state is
\begin{equation}
|\psi(t)\rangle = \sum_n c_n e^{-itF_\beta(E_n)} |E_n\rangle.
\end{equation}

The norm is
\begin{equation}
\langle\psi(t)|\psi(t)\rangle = \sum_n |c_n|^2 |e^{-itF_\beta(E_n)}|^2 = \sum_n |c_n|^2 = 1,
\end{equation}
confirming unitarity at the level of individual states.

\section{Non-Stationary Phase and Decoherence}
\label{sec:nonstationary}

Suppression of interference arises when summations or integrals over energy are dominated by rapidly varying phases induced by $F_\beta(E)$. To capture this, consider an oscillatory integral of the form
\begin{equation}
I_\beta(t) = \int_0^\infty f(E) e^{-itF_\beta(E)} dE,
\label{eq:Ibeta}
\end{equation}
where $f$ is a smooth function with compact support in $(0, \infty)$.

Define the phase
\begin{equation}
\Phi_\beta(E) = tF_\beta(E) = tE \left(1 + \beta \log \frac{E}{E_*}\right),
\label{eq:phase}
\end{equation}
and its derivative
\begin{equation}
\Phi'_\beta(E) = t\left[1 + \beta\left(\log \frac{E}{E_*} + 1\right)\right].
\label{eq:phasederiv}
\end{equation}

In regimes where $|\Phi'_\beta(E)|$ is large and does not vanish on the support of $f$, standard non-stationary-phase estimates apply.

\begin{theorem}
\label{thm:nonstationary}
Let $f \in C^1_c((0, \infty))$ and $t \neq 0$. Assume that there exist $E_{\min}, E_{\max}$ with $\text{supp}\, f \subset [E_{\min}, E_{\max}] \subset (0, \infty)$, and that for $|\beta|$ sufficiently large, $|\Phi'_\beta(E)| \geq c|t||\beta|$ on $[E_{\min}, E_{\max}]$ for some $c > 0$. Then there exist constants $C, \beta_0 > 0$ such that for all $|\beta| \geq \beta_0$,
\begin{equation}
|I_\beta(t)| \leq \frac{C}{|\beta|}.
\label{eq:decay}
\end{equation}
In particular, $I_\beta(t) \to 0$ as $|\beta| \to \infty$ with decay at least of order $1/|\beta|$.
\end{theorem}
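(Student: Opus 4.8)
The plan is to prove this by a single integration by parts---the archetypal non-stationary-phase argument. First I would record that on $[E_{\min},E_{\max}]$ the phase is smooth, since $\Phi_\beta(E)=tE+t\beta E\log(E/E_*)$ is $C^\infty$ on all of $(0,\infty)$, with
\[
\Phi'_\beta(E)=t\left[1+\beta\left(\log\frac{E}{E_*}+1\right)\right],\qquad \Phi''_\beta(E)=\frac{t\beta}{E}.
\]
By hypothesis, for $|\beta|\ge\beta_0$ we have $|\Phi'_\beta(E)|\ge c\,|t|\,|\beta|>0$ throughout $[E_{\min},E_{\max}]$, so $1/\Phi'_\beta$ is well defined and $C^1$ there.

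Next I would introduce the first-order operator $Lg=(-i\,\Phi'_\beta)^{-1}g'$, designed so that $L\bigl(e^{-i\Phi_\beta}\bigr)=e^{-i\Phi_\beta}$. Inserting this identity into $I_\beta(t)=\int f(E)\,e^{-i\Phi_\beta(E)}\,dE$ and integrating by parts once---the boundary terms vanish because $f$, and hence the whole integrand, has compact support in $(0,\infty)$---gives
\[
I_\beta(t)=-\int_{E_{\min}}^{E_{\max}}\left[\frac{f'(E)}{-i\,\Phi'_\beta(E)}+\frac{f(E)\,\Phi''_\beta(E)}{i\,(\Phi'_\beta(E))^2}\right]e^{-i\Phi_\beta(E)}\,dE,
\]
where the second term comes from $\tfrac{d}{dE}(1/\Phi'_\beta)=-\Phi''_\beta/(\Phi'_\beta)^2$. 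Taking absolute values under the integral sign,
\[
|I_\beta(t)|\le\int_{E_{\min}}^{E_{\max}}\frac{|f'(E)|}{|\Phi'_\beta(E)|}\,dE+\int_{E_{\min}}^{E_{\max}}\frac{|f(E)|\,|\Phi''_\beta(E)|}{|\Phi'_\beta(E)|^2}\,dE.
\]

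Then I would insert the two bounds. The lower bound $|\Phi'_\beta|\ge c\,|t|\,|\beta|$ controls both denominators, while $|\Phi''_\beta(E)|=|t|\,|\beta|/E\le |t|\,|\beta|/E_{\min}$ controls the numerator of the second integrand. This yields
\[
|I_\beta(t)|\le\frac{1}{c\,|t|\,|\beta|}\int_{E_{\min}}^{E_{\max}}|f'|\,dE+\frac{1}{c^2\,|t|\,|\beta|\,E_{\min}}\int_{E_{\min}}^{E_{\max}}|f|\,dE=\frac{C}{|\beta|},
\]
with $C=\|f'\|_{L^1}/(c\,|t|)+\|f\|_{L^1}/(c^2\,|t|\,E_{\min})$, which depends on $f$, $t$, $c$, and $E_{\min}$ but not on $\beta$. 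This is exactly Eq.~\eqref{eq:decay}, and letting $|\beta|\to\infty$ gives decay of order at least $1/|\beta|$.

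The step requiring the most care is structural rather than computational: one must be sure $\Phi'_\beta$ stays bounded away from zero on the \emph{entire} support. Since $\Phi'_\beta(E)/t=1+\beta+\beta\log(E/E_*)$ vanishes precisely when $\log(E/E_*)=-1-1/\beta$, the non-stationary hypothesis fails (in the large-$|\beta|$ regime) exactly when $E_*e^{-1}\in[E_{\min},E_{\max}]$; away from that coincidence it holds automatically for $|\beta|$ large and the argument above closes. Near such a point a genuine stationary-phase analysis would give only $\order{|\beta|^{-1/2}}$ decay, which is why the theorem carries the non-vanishing assumption as a hypothesis. A secondary, bookkeeping point: only one integration by parts is performed---all that $f\in C^1_c$ permits---and since $\Phi''_\beta/(\Phi'_\beta)^2$ is $\order{1/(|t|\,|\beta|\,E)}$ whereas $1/\Phi'_\beta$ is $\order{1/(|t|\,|\beta|)}$, that single step already produces the full $1/|\beta|$ gain.
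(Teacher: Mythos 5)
Your proof is correct and follows essentially the same route as the paper: a single integration by parts using $e^{-i\Phi_\beta}=(-i\Phi'_\beta)^{-1}\frac{d}{dE}e^{-i\Phi_\beta}$, then the lower bound $|\Phi'_\beta|\ge c|t||\beta|$ on the support of $f$. If anything, your version is the more complete one: the paper's displayed identity retains the (actually vanishing) boundary term but omits the $f\,\Phi''_\beta/(\Phi'_\beta)^2$ contribution from differentiating $1/\Phi'_\beta$, whereas you include it and correctly bound it via $|\Phi''_\beta(E)|=|t||\beta|/E\le |t||\beta|/E_{\min}$, which still gives $\order{1/|\beta|}$, and your closing remark about the hypothesis failing only when the stationary point $E_*e^{-1-1/\beta}$ lies in $[E_{\min},E_{\max}]$ is an accurate account of why the non-vanishing assumption is needed.
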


\begin{proof}
Assume $\text{supp}\, f \subset [E_{\min}, E_{\max}]$. Integration by parts yields
\begin{equation}
I_\beta(t) = \int_{E_{\min}}^{E_{\max}} f(E) e^{-i\Phi_\beta(E)} dE
= \left[\frac{f(E)}{-i\Phi'_\beta(E)} e^{-i\Phi_\beta(E)}\right]_{E_{\min}}^{E_{\max}} + \int_{E_{\min}}^{E_{\max}} \frac{f'(E)}{i\Phi'_\beta(E)} e^{-i\Phi_\beta(E)} dE.
\end{equation}
Using $|\Phi'_\beta(E)| \geq c|t||\beta|$ and boundedness of $f, f'$ on $[E_{\min}, E_{\max}]$, we obtain
\begin{equation}
|I_\beta(t)| \leq \frac{\|f\|_\infty}{c|t||\beta|} + \frac{\|f'\|_\infty(E_{\max} - E_{\min})}{c|t||\beta|} \leq \frac{C}{|\beta|}
\end{equation}
for a suitable constant $C > 0$ independent of $\beta$.
\end{proof}

The theorem shows that oscillatory integrals involving phases $F_\beta(E)$ with large slope are suppressed as $|\beta| \to \infty$. For interference between two discrete energy levels $(E_m, E_n)$, a natural effective decoherence timescale is
\begin{equation}
\tau_{\text{dec}}(E_m, E_n; \beta) \sim \frac{1}{|\beta| |G(E_m) - G(E_n)|}, \quad G(E) := E \log \frac{E}{E_*},
\label{eq:tdec}
\end{equation}
so that interference between those levels is strongly suppressed once $|t| \gg \tau_{\text{dec}}$.

\subsection{Density matrix and off-diagonal coherence}

Consider the density matrix $\rho(t) = |\psi(t)\rangle\langle\psi(t)|$. For a two-level superposition, the off-diagonal element evolves as
\begin{equation}
\rho_{12}(t) = c_1 c_2^* \exp\Big\{-it\big[(E_1 - E_2) + \beta(G(E_1) - G(E_2))\big]\Big\},
\label{eq:rho12}
\end{equation}
where $G(E) = E\log(E/E_*)$. While $|\rho_{12}(t)| = |c_1 c_2^*|$ remains constant (reflecting unitarity), the phase oscillates at an effective frequency
\begin{equation}
\omega_{\text{eff}} = (E_1 - E_2) + \beta(G(E_1) - G(E_2)).
\label{eq:omegaeff}
\end{equation}

In measurements that average over time windows $\Delta t$ or integrate over energy distributions, the $\beta$-dependent term causes rapid dephasing, effectively suppressing the interference contribution. This is the mechanism of decoherence in the present framework: not loss of unitarity, but suppression of observable coherence through rapid phase variation.

\section{Physical Motivation}
\label{sec:motivation}

Several contexts suggest logarithmic spectral deformations of the form $F_\beta(H)$.

\subsection{Clock imperfections and operational time}

A physical clock with Hamiltonian $H_C$ may tick at an energy-dependent rate:
\begin{equation}
\frac{dt_{\text{physical}}}{dt_{\text{ideal}}} = 1 + \alpha\log(E/E_0).
\end{equation}

Effective evolution becomes
\begin{equation}
\exp(-itE) \to \exp\left[-it Ef(\log E)\right],
\label{eq:clock}
\end{equation}
with $f(\log E) \approx 1+\beta \log(E/E_*)$ over some energy window. This leads directly to an effective generator of the form $F_\beta(H)$ in an operational description of time evolution. Here $\beta$ measures how much physical time deviates from ideal coordinate time. Related ideas appear in clock-induced decoherence and fundamental limits to timekeeping \cite{Gambini2007}.

This connection to clock physics is not merely formal. Any quantum system subjected to external timing---whether a trapped ion interrogated by lasers or a qubit gated by microwave pulses---effectively conditions its evolution on the ``ticking'' of that external clock. If the clock's tick rate varies with the system's energy (even slightly), the net effect is an energy-dependent time dilation. Over many cycles, this accumulates as a phase $\sim \beta E\log E \times t$. The question is whether $\beta$ is large enough to measure.

\subsection{Renormalization group and effective actions}

Renormalization-group and effective-action treatments of quantum field theories and quantum gravity frequently produce running couplings and masses of the form
\begin{equation}
g(E) = g_0 + \alpha \log(E/\mu),
\label{eq:running}
\end{equation}
or effective Hamiltonians
\begin{equation}
H_{\text{eff}}(E) = Ef(\log E),
\label{eq:Heff}
\end{equation}
with $f$ a slowly varying function. When these expressions enter the time-evolution factor $\exp[-itH_{\text{eff}}(E)]$, the logarithmic dependence generates spectral phases with $E \log(E/\mu)$ structure. The deformation $F_\beta(H)$ can thus be viewed as a compact representation of RG-induced distortions of the generator, capturing the leading logarithmic behavior in a simple analytic form.

\subsection{Semiclassical gravity and complex actions}

Anomaly-induced effective actions and semiclassical gravitational calculations often involve operators such as $\log(\Box/\mu^2)$, where $\Box$ is a covariant d'Alembertian. In minisuperspace or reduced models, these terms translate into logarithmic functions of the effective Hamiltonian or energy. Standard references on quantum fields in curved spacetime \cite{Birrell1982,Parker2009} contain numerous examples in which logarithmic spectral terms appear.

Likewise, semiclassical WKB analyses of gravitational systems and black-hole spacetimes can yield actions of the form
\begin{equation}
S(E) = S_0(E) + i\hbar\gamma \log \frac{E}{E_{\text{P}}} + \cdots,
\label{eq:Saction}
\end{equation}
so that the semiclassical wave function
\begin{equation}
\exp\left[\frac{i}{\hbar}S(E)\right] = E^{i\gamma} \exp\left[\frac{i}{\hbar}S_0(E)\right]
\label{eq:wavefunction}
\end{equation}
contains precisely the logarithmic spectral factor motivating deformations like $F_\beta(H)$. In the present framework, this structure is encoded in a real, self-adjoint generator which produces energy-dependent phases proportional to $E \log(E/E_*)$.

$F_\beta(H)$ should be interpreted as a phenomenological stand-in for a more general functional deformation $F(H)$ arising from integrating out gravitational or high-energy degrees of freedom, with the particular form \eqref{eq:deformation} capturing the leading logarithmic behavior without committing to a specific microscopic model.

\section{Logarithmic vs.\ Real-Order Fractional Calculus}
\label{sec:fractional}

The appearance of a logarithmic term in the generator might invite comparison with fractional calculus and fractional quantum mechanics \cite{Oldham2006,Podlubny1999}. In standard fractional calculus, one studies real-order derivatives and integrals (e.g.\ Riemann-Liouville or Caputo derivatives) that act via nonlocal integral kernels and encode memory effects. Real-order fractional dynamics often lead to anomalous diffusion and non-Markovian behavior.

The present construction is of a different type. The deformation
\begin{equation}
F_\beta(H) = H + \beta H \log \frac{H}{E_*}
\end{equation}
belongs to the spectral functional calculus of a positive operator $H$. There is no associated real-order memory kernel in time and no direct connection to the Riemann-Liouville or Caputo operators. The logarithmic term appears in the energy domain, not as a fractional time derivative. The resulting dynamics remain unitary, and decoherence arises from oscillatory phase suppression rather than from anomalous diffusion.

In particular, the mechanism here should not be viewed as an instance of fractional Schrödinger dynamics. It is instead a purely spectral deformation in the sense of functional calculus, whose primary consequence is a dephasing of energy eigencomponents governed by energy-dependent phases involving $\log(E/E_*)$.

\section{Illustrative Examples}
\label{sec:examples}

We illustrate how the deformation $F_\beta(H)$ acts on several simple Hamiltonians. The goal is to show concretely how spectral structure feeds into decoherence rates.

\subsection{Two-level system: worked example}

Consider a qubit with energies $E_1, E_2$:
\begin{equation}
H = \begin{pmatrix} E_1 & 0 \\ 0 & E_2 \end{pmatrix}, \quad E_1 = 5\,\text{GHz} \times 2\pi, \quad E_2 = 5.1\,\text{GHz} \times 2\pi.
\end{equation}

Taking $E_* = 1\,\text{GHz} \times 2\pi$ as reference scale, we have
\begin{align}
\log(E_1/E_*) &= \log 5 \approx 1.61, \\
\log(E_2/E_*) &= \log 5.1 \approx 1.63.
\end{align}

The functions $G(E) = E\log(E/E_*)$ evaluate to
\begin{align}
G(E_1) &= E_1 \times 1.61 = (2\pi \times 5 \times 10^9\,\text{s}^{-1}) \times 1.61 \approx 5.05 \times 10^{10}\,\text{s}^{-1}, \\
G(E_2) &= E_2 \times 1.63 = (2\pi \times 5.1 \times 10^9\,\text{s}^{-1}) \times 1.63 \approx 5.22 \times 10^{10}\,\text{s}^{-1}.
\end{align}

The difference is
\begin{equation}
G(E_2) - G(E_1) \approx 1.7 \times 10^9\,\text{s}^{-1}.
\end{equation}

For an initial superposition $|\psi(0)\rangle = (|1\rangle + |2\rangle)/\sqrt{2}$, the off-diagonal density matrix element oscillates as
\begin{equation}
\rho_{12}(t) = \frac{1}{2} \exp\Big\{-it\big[\Delta E_0 + \beta \times 1.7 \times 10^9\,\text{s}^{-1}\big]\Big\},
\end{equation}
where $\Delta E_0 = E_2 - E_1 = 2\pi \times 0.1 \times 10^9 \approx 6.3 \times 10^8\,\text{s}^{-1}$.

The effective oscillation frequency is shifted by
\begin{equation}
\Delta\omega = \beta \times 1.7 \times 10^9\,\text{s}^{-1}.
\end{equation}

For $\beta = 10^{-6}$, this gives $\Delta\omega/2\pi \approx 270$ Hz, which is in principle measurable with modern precision spectroscopy. For $\beta = 10^{-8}$, $\Delta\omega/2\pi \approx 2.7$ Hz, requiring sub-Hz frequency resolution.

\subsection{Quartic oscillator}

As a simple bound-state example, consider the quartic oscillator
\begin{equation}
H = \frac{p^2}{2m} + \lambda x^4, \quad \lambda > 0.
\end{equation}

Its eigenvalues satisfy asymptotically
\begin{equation}
E_n \sim \kappa n^{4/3}, \quad n \to \infty,
\end{equation}
for some constant $\kappa > 0$. Under the deformation, a superposition $\sum_n c_n|n\rangle$ evolves with phases $e^{-itF_\beta(E_n)}$. Interference between levels $n$ and $m$ decays on a timescale
\begin{equation}
\tau_{\text{dec}}(n, m; \beta) \sim \frac{1}{|\beta| |E_n \log(E_n/E_*) - E_m \log(E_m/E_*)|}.
\end{equation}

For large $n, m$, the asymptotic form of $E_n$ can be used to obtain explicit scaling relations between decoherence time, level index, and $\beta$. One might initially expect the decoherence timescale to scale as $\tau \sim 1/(\beta E)$, treating the logarithm as approximately constant. However, the full expression shows that $G(E) = E\log E$ grows superlinearly, making highly excited states more sensitive to the deformation than this naive estimate would suggest. For $E_n \sim n^{4/3}$, we have $G(E_n) \sim n^{4/3}\log n$, which scales faster than any fixed power of $n$.

\subsection{FRW minisuperspace toy model}

In FRW minisuperspace models with a single scale factor $a$, effective Hamiltonians of the form
\begin{equation}
H = -\frac{d^2}{da^2} + V(a)
\end{equation}
arise, with potentials $V(a)$ encoding curvature, cosmological constant, and matter content. For simple choices of $V(a)$ (e.g.\ harmonic-oscillator-like near a minimum), one obtains a discrete spectrum $E_n$ that grows approximately linearly in $n$. The deformation $F_\beta(H)$ then introduces phases $F_\beta(E_n)$ that dephase superpositions of minisuperspace modes, providing a toy model of spontaneous decoherence in a cosmological setting. The timescales follow the same pattern as in Eq.~\eqref{eq:tdec}.

\subsection{Curved-background and Schwarzschild interior-type Hamiltonians}

Effective one-dimensional Hamiltonians modeling motion in curved backgrounds, including toy models of the Schwarzschild interior, often take the form
\begin{equation}
H = -\frac{d^2}{dx^2} + V(x),
\end{equation}
with potentials such as
\begin{equation}
V(x) = -\frac{\alpha}{x^2} + \beta_1 x^2, \quad \alpha, \beta_1 > 0.
\end{equation}

This class of potentials admits a discrete spectrum for appropriate boundary conditions and has been used as a simple model of black-hole interior dynamics. Let $E_n$ denote the eigenvalues. Under the deformation, the time evolution of a superposition of interior modes is governed by
\begin{equation}
\sum_n c_n e^{-itF_\beta(E_n)} |n\rangle,
\end{equation}
and the interference structure can be studied by approximating the sum as an integral over a smooth spectral density. The absence of stationary points in the deformed phase over suitable windows implies $\order{1/|\beta|}$ suppression of cross terms, providing a simple model of spontaneous decoherence in an effective black-hole interior Hamiltonian.

\section{Phenomenology and Experimental Constraints}
\label{sec:phenomenology}

\subsection{General scaling and decoherence envelopes}

Consider two energy levels $E_m$ and $E_n$. The decoherence rate is approximately
\begin{equation}
\Gamma_{mn}(\beta) \sim |\beta| |G(E_m) - G(E_n)|, \quad G(E) = E \log \frac{E}{E_*},
\label{eq:Gamma}
\end{equation}
with corresponding timescale $\tau_{\text{dec}} \sim 1/\Gamma_{mn}$, cf.\ Eq.~\eqref{eq:tdec}. If environmental decoherence produces a standard coherence envelope $C_{\text{std}}(t)$, the present mechanism can be viewed as supplying an additional deterministic decay envelope, so that
\begin{equation}
C_{\text{meas}}(t) \approx C_{\text{std}}(t) e^{-\Gamma_{mn}t},
\label{eq:Cmeas}
\end{equation}
over timescales where the non-stationary-phase estimate is valid. In an experiment, one would first calibrate $C_{\text{std}}(t)$ using environmental models and then fit the residual decay to extract or bound $\Gamma_{mn}$, and hence $\beta$. This logic parallels the way collapse-model parameters are bounded by precision experiments \cite{Bassi2013}.

\subsection{Superconducting qubits as a benchmark platform}

Consider a superconducting-qubit platform where typical transition energies lie in the range $E \sim 5$ GHz and coherence times $T_2 \sim 100$ \textmu s are routinely achieved. Energies are measured in angular-frequency units (setting $\hbar = 1$), so that $E$ and $\Delta E$ are understood as angular frequencies in s$^{-1}$; numerically, $E \sim 5$ GHz should thus be read as $E \sim 2\pi \times 5 \times 10^9$ s$^{-1}$.

For two energy levels separated by $\Delta E \sim 100$ MHz, the quantity
\begin{equation}
|G(E_m) - G(E_n)| = \left|E_m \log \frac{E_m}{E_*} - E_n \log \frac{E_n}{E_*}\right|
\end{equation}
can be approximated by
\begin{equation}
|G(E_m) - G(E_n)| \approx |\Delta E| \left|\log \frac{E}{E_*} + 1\right|,
\end{equation}
for $E$ in the relevant range. Taking $E_*$ in the GHz range, $\log(E/E_*)$ is $\order{1}$, and for concreteness one may estimate $|\log(E/E_*) + 1| \sim 20$ in natural units.\footnote{We assume $|\log(E/E_*) + 1| \sim 20$ as a rough estimate. The actual value depends on the choice of $E_*$, but the scaling $\propto \beta \Delta E \log E$ is robust.} Using $\Delta E = 10^8$ s$^{-1}$, we obtain a characteristic scale
\begin{equation}
|G(E_m) - G(E_n)| \sim 2 \times 10^9\,\text{s}^{-1}.
\end{equation}

Requiring spontaneous decoherence from the deformation to occur on timescales longer than the experimentally observed coherence time, i.e.\ $\tau_{\text{dec}} \gtrsim T_2$, gives
\begin{equation}
\frac{1}{|\beta| |G(E_m) - G(E_n)|} \gtrsim T_2,
\end{equation}
and hence
\begin{equation}
|\beta| \lesssim \frac{1}{T_2 |G(E_m) - G(E_n)|} \sim \frac{1}{(10^{-4}\,\text{s})(2 \times 10^9\,\text{s}^{-1})} \sim 5 \times 10^{-6}.
\end{equation}

Thus, for superconducting qubits, the absence of anomalous decoherence at the $T_2 \sim 100$ \textmu s level constrains the deformation parameter to
\begin{equation}
|\beta| \lesssim 10^{-5}.
\label{eq:betabound}
\end{equation}

Platforms with longer coherence times, such as trapped ions or certain cold-atom setups, can in principle provide even stronger bounds.

\begin{tcolorbox}[colback=gray!5!white,colframe=gray!75!black,title=Experimental Protocol]
\textbf{Constraining $\beta$ in superconducting qubits:}
\begin{enumerate}
\item Prepare equal superposition: $|\psi\rangle = (|0\rangle + |1\rangle)/\sqrt{2}$
\item Measure Ramsey fringe frequency $\omega_{\text{obs}}$ with precision $\delta\omega$
\item Standard theory predicts: $\omega_0 = E_1 - E_0$
\item LSD modification: $\omega_{\text{LSD}} = \omega_0 + \beta[G(E_1) - G(E_0)]$
\item Any deviation $|\omega_{\text{obs}} - \omega_0| > \delta\omega$ constrains $\beta$:
\begin{equation*}
|\beta| \lesssim \frac{\delta\omega}{|G(E_1) - G(E_0)|}
\end{equation*}
\item For typical values: $\delta\omega/2\pi \sim 1$ Hz, $|G(E_1) - G(E_0)| \sim 10^9$ Hz $\Rightarrow |\beta| \lesssim 10^{-9}$ (order of magnitude).
\end{enumerate}
\end{tcolorbox}

Operationally, a practical way for an experimentalist to fit $\beta$ is to treat the deformation as supplying an additional deterministic decay envelope on top of the usual environmental coherence function. If $C_{\text{std}}(t)$ denotes the standard coherence curve obtained from Ramsey or spin-echo measurements, the prediction of the present mechanism is that the measured signal can be modeled as
\begin{equation}
C_{\text{meas}}(t) = C_{\text{std}}(t) e^{-\Gamma_{mn}t}, \quad \Gamma_{mn} = |\beta| |G(E_m) - G(E_n)|.
\end{equation}

In practice, one fits the residual exponential envelope after independently calibrating $C_{\text{std}}(t)$; the slope of the residual decay yields $\Gamma_{mn}$, and thus $\beta$, directly. If no statistically significant residual decay is observed, the sensitivity of the fit provides an upper bound on $|\beta|$. In this way, routine coherence measurements become quantitative probes of logarithmic spectral deformations.

\subsection{Summary table of representative platforms}

Different experimental platforms map into sensitivity to the deformation parameter $\beta$. For each, one may identify representative ranges of transition frequencies $E$, level splittings $\Delta E$, and coherence times $T_2$, and then apply the scaling
\begin{equation}
|\beta|_{\max} \sim \frac{1}{T_2 |\Delta E| |\log(E/E_*) + 1|},
\label{eq:betascaling}
\end{equation}
obtained by approximating $|G(E_m) - G(E_n)| \approx |\Delta E| |\log(E/E_*) + 1|$.

\begin{table}[h]
\centering
\begin{tabular}{@{}lccc@{}}
\toprule
\textbf{Platform} & \textbf{Typical $E$ (Hz)} & \textbf{$T_2$ (s)} & \textbf{Illustrative $|\beta|_{\max}$} \\
\midrule
Superconducting qubits & $\sim 5 \times 10^9$ & $\sim 10^{-4}$ & $\sim 10^{-5}$ \\
Trapped ions (optical qubits) & $\sim 10^{15}$ & $\sim 1$ & $\sim 10^{-7}$--$10^{-8}$ \\
NV centers / solid-state spins & $\sim 3 \times 10^9$ & $10^{-3}$--$10^{-2}$ & $10^{-6}$--$10^{-7}$ \\
Cold atoms in optical lattices & $10^4$--$10^5$ & $1$--$10$ & $10^{-5}$--$10^{-6}$ \\
\bottomrule
\end{tabular}
\caption{Representative order-of-magnitude parameters for several experimental platforms and corresponding illustrative sensitivities to the deformation parameter $|\beta|$, based on the scaling \eqref{eq:betascaling}. The values are indicative and are not tied to any specific laboratory implementation.}
\label{tab:platforms}
\end{table}

The values are indicative and are not tied to any specific laboratory implementation. They are meant to convey the relative leverage of different platforms given the same underlying scaling.

\section{Related Work and Novelty}
\label{sec:related}

We situate the present proposal among existing models of decoherence, both non-environmental (intrinsic) and environmental (open-system) in character.

\subsection{Non-environmental (intrinsic) models}

Several proposals have attempted to describe intrinsic or spontaneous decoherence without explicit environments. Milburn \cite{Milburn1991} replaces smooth Schrödinger evolution by a stochastic sequence of unitary kicks characterized by a mean waiting time, leading to a master equation of Lindblad form. Diósi and Penrose \cite{Diosi1987,Penrose1996} argue for gravitational collapse of the wave function based on gravitational self-energy considerations, yielding a characteristic collapse timescale of order $\hbar/E\Delta$. GRW and CSL models \cite{Ghirardi1986,Bassi2003,Bassi2013} implement spontaneous localization events driven by noise, leading to fundamentally nonunitary dynamics. Clock-induced decoherence in quantum-gravity-motivated settings \cite{Gambini2007} attributes decoherence to the use of physical (imperfect) clocks in place of an ideal external time parameter, leading again to Lindblad-type terms and exponential damping of coherences.

More recently, Snoke and collaborators \cite{Snoke2021,Snoke2023a,Snoke2023b,Snoke2024} have developed energy-conserving spontaneous collapse models, extended to nonlocal relativistic quantum field theory and elaborated with explicit experimental predictions and a broader interpretive framework. In these models, collapse occurs in an energy-conserving manner, but the dynamics remains stochastic and nonunitary at the level of state evolution.

All of these intrinsic models introduce nonunitarity through stochastic processes, collapse operators, or explicit Lindblad terms. By contrast, the present model implements decoherence via a unitary spectral deformation: the generator $F_\beta(H)$ is self-adjoint, the dynamics is unitary, and decoherence arises from non-stationary-phase suppression of spectral interference.

\subsection{Environmental / open-system models}

Standard environment-induced decoherence is modeled by master equations of Lindblad type, Caldeira-Leggett oscillator baths, quantum trajectories, and, more recently, non-Hermitian Hamiltonian deformations. Lindblad master equations \cite{Breuer2002} describe dissipative evolution of a subsystem coupled to a bath, yielding exponential damping of off-diagonal density-matrix elements, and constitute the canonical form of Markovian open-system dynamics. Caldeira-Leggett-type models \cite{Caldeira1981} implement an oscillator bath linearly coupled to the system, leading to Brownian motion and friction with reduced dynamics that is nonunitary once the bath is traced out. Quantum-trajectory approaches condition evolution on measurement records, with stochastic jumps and nonlinear state updates, providing an unraveling of Lindblad dynamics \cite{Breuer2002}.

Non-Hermitian Hamiltonian deformations in the sense of Matsoukas-Roubeas and collaborators \cite{Matsoukas2023a,Matsoukas2023b} encode energy diffusion and dephasing via effective non-Hermitian generators $H - i\Gamma$, related to open Markovian dynamics, imperfect timekeeping, and inverse $T\bar{T}$ deformations. In this framework, the imaginary parts of the eigenvalues directly produce decay, and one often restores norm via nonlinear renormalization of the state. In related work, unitarity breaking in self-averaging spectral form factors is analyzed using mixed-unitary channels and non-Hermitian Hamiltonians.

All of these models either explicitly break unitarity (reduced dynamics) or require nonlinear renormalization of the state. They differ from the present proposal, which preserves exact unitarity at the level of the full evolution operator and produces decoherence purely through energy-dependent phases.

\subsection{Structural novelty}

\begin{table}[h]
\centering
\begin{tabular}{@{}lccc@{}}
\toprule
\textbf{Feature} & \textbf{Environmental} & \textbf{Collapse models} & \textbf{LSD (this work)} \\
\midrule
Generator type & Open system & Stochastic & Self-adjoint \\
Unitarity & No (reduced) & No & Yes \\
Norm preservation & No & No & Yes \\
Decoherence origin & Entanglement & Wave collapse & Phase dephasing \\
Time evolution & Lindblad/master eq.\ & Jump process & Schrödinger \\
Free parameter & Bath coupling $\gamma$ & Collapse rate $\lambda$ & Phase strength $\beta$ \\
Observable signature & Exponential decay & Position localization & Frequency shift \\
\bottomrule
\end{tabular}
\caption{Structural comparison of decoherence mechanisms. The present logarithmic spectral deformation maintains strict unitarity while producing effective decoherence through energy-dependent phase accumulation.}
\label{tab:comparison}
\end{table}

The structural features of the present mechanism can be summarized as follows:
\begin{itemize}
\item \textbf{Unitarity of evolution:} exactly unitary for all $\beta$, since $F_\beta(H)$ is self-adjoint.
\item \textbf{Source of decoherence:} oscillatory spectral phases $E \log(E/E_*)$ inducing non-stationary-phase suppression.
\item \textbf{Mathematical structure:} real spectral functional calculus $H \mapsto F_\beta(H)$, without stochasticity or non-Hermiticity.
\item \textbf{Environment:} no explicit environment or bath degrees of freedom are introduced.
\item \textbf{Form of decoherence:} deterministic dephasing and suppression of interference integrals, with scaling $\sim 1/|\beta|$ in appropriate regimes.
\item \textbf{Born rule:} unmodified, as the inner product and projections remain standard.
\item \textbf{Novel parameter:} a single dimensionless parameter $\beta$ controlling the strength of logarithmic spectral corrections.
\end{itemize}

This places the proposal in a distinct class from both collapse models and standard open-system master equations.

\section{Discussion and Outlook}
\label{sec:discussion}

Logarithmic spectral deformations offer a new window into quantum decoherence. Unlike collapse models, the dynamics remains strictly unitary. Unlike open-system approaches, no environment is invoked. Yet interference is suppressed---not through probability loss, but through rapid phase variation.

The mechanism is simple: energy eigenstates accumulate phases $\sim tE(1 + \beta\log E)$ instead of $\sim tE$. When $\beta \neq 0$, different energies drift apart in phase space at logarithmically different rates. In coarse-grained measurements or ensemble averages, this drift destroys observable coherence.

Could $\beta$ be nonzero in nature? Renormalization group flow, quantum gravity corrections, and clock imperfections all suggest logarithmic spectral terms. Whether they appear at observable strength---$|\beta| \gtrsim 10^{-8}$---is an experimental question. Precision spectroscopy in trapped ions or superconducting qubits could answer it within this decade.

One might reasonably ask why $G(E) = E\log(E/E_*)$ rather than, say, $E(\log E)^2$ or $E^{1+\epsilon}\log E$. The answer is pragmatic: $E\log E$ is the leading logarithmic correction appearing in renormalization group flow, semiclassical actions, and effective field theories. Other functions would work mathematically but lack the same physical motivation.

%What remains puzzling is the reference scale $E_*$. While it cancels in observable predictions, its presence in the formalism suggests a fundamental energy scale---perhaps the Planck energy, the electroweak scale, or the QCD scale. Without a microscopic derivation, we cannot say. This is the price of phenomenology: we parameterize effects without understanding their origin.

%Still, the framework is testable. That matters more than philosophical completeness.

A conceptual limitation of the present treatment is that $\beta$ has been introduced phenomenologically rather than derived from a microscopic model. Whether quantum gravity predicts specific values of $\beta$ is unknown. A systematic classification of admissible spectral deformations compatible with diffeomorphism invariance and effective field-theory principles constitutes a natural next step.

Further work may develop a more detailed phenomenology, including explicit models for $\beta$ in terms of quantum-gravity parameters, and investigate the interplay between this spontaneous decoherence and standard environment-induced decoherence. It would also be natural to study continuous-spectrum systems, scattering problems, and quantum chaotic Hamiltonians under the same deformation to explore whether the mechanism has observable consequences beyond simple bound-state examples. The key experimental challenge is distinguishing the logarithmic energy dependence of LSD from other systematic frequency shifts.

\section*{Acknowledgments}

The author thanks colleagues in quantum foundations and quantum gravity whose discussions helped further refine the presentation of the spectral deformation developed here and its relation to existing decoherence models.

\end{document}